\documentclass[11pt]{article}
\usepackage{amssymb,amsmath,fullpage,graphicx}
\usepackage{color}

\setlength{\topmargin}{-0.5in}
\setlength{\textheight}{8.9in}
\setlength{\textwidth}{6.4in}
\setlength{\headsep}{1.2cm}

\newtheorem{theorem}{Theorem}
\newtheorem{lemma}{Lemma}
\newtheorem{remark}{Remark}

\def\R{\mathbb{R}}

\newenvironment{proof}{
    \noindent {\it Proof.}}{\hfill$\Box$
}
\newenvironment{proof1}{
    \noindent {\it Proof }}{\hfill$\Box$
}

\begin{document}

\begin{center}
{\Large \textbf{Global existence of small-norm solutions \\
in the reduced Ostrovsky equation}}

\vspace{0.5cm}

\textbf{Roger Grimshaw}

Department of Mathematical Sciences, Loughborough University, \\
Loughborough, LE11 3TU, UK \\

\vspace{0.2cm}

\textbf{Dmitry Pelinovsky }

Department of Mathematics, McMaster University, \\
Hamilton, Ontario, L8S 4K1, Canada

\vspace{1pt}
\end{center}

\textbf{Abstract}: We use a novel transformation of the reduced
Ostrovsky equation to the integrable Tzitz\'eica equation
and prove global existence of small-norm solutions
in Sobolev space $H^3(\mathbb{R})$. This scenario
is an alternative to finite-time wave breaking of
large-norm solutions of the reduced Ostrovsky equation.
We also discuss a sharp sufficient condition for the finite-time
wave breaking.

\section{Introduction}
\label{1}

{\em The reduced Ostrovsky equation}
\begin{equation}
\label{redOst}
(u_t + u u_x)_x = u,
\end{equation}
is the zero high-frequency dispersion limit ($\beta \to 0$)
of the Ostrovsky equation
\begin{equation}
\label{Ostrovsky}
\left( u_t + u u_x + \beta u_{xxx} \right)_x = u.
\end{equation}
The evolution equation (\ref{Ostrovsky}) was originally derived by Ostrovsky
\cite{Ostrov} to model small-amplitude long waves in a rotating
fluid of finite depth. Local and global well-posedness
of the Ostrovsky equation (\ref{Ostrovsky})
in energy space $H^1(\mathbb{R})$ was studied in recent papers \cite{GL07,LM06,Ts,VL04}.

Corresponding rigorous results for the reduced Ostrovsky equation (\ref{redOst})
are more complicated.  Local solutions exist in Sobolev
space $H^s(\mathbb{R})$ for $s > \frac{3}{2}$ \cite{SSK10}.
But for sufficiently steep initial data $u_0 \in C^1(\mathbb{R})$, local solutions break in a
finite time \cite{boyd05,Hu,LPS} in the standard sense of finite-time wave breaking
that occurs in the inviscid Burgers equation $u_t + u u_x = 0$.

However, a proof of global existence for sufficiently small
initial data has remained an open problem up to now.
In a similar equation with a cubic nonlinear term (called {\em the short-pulse equation}),
the proof of global existence was recently developed
with the help of a bi-infinite sequence of conserved quantities
\cite{PelSak}. These global solutions for small initial data coexist with
wave breaking solutions for large initial data \cite{LPS-09}.
Global existence and scattering
of small-norm solutions to zero in the generalized short-pulse equation
with quartic and higher-order nonlinear terms follow from the results of \cite{SSK10}.

Rather different sufficient conditions on the initial data
for wave breaking were obtained recently in \cite{GH} on the basis of
asymptotic analysis and supporting numerical simulations
(similar numerical simulations can be found in \cite{boyd05}).
It was conjectured  in \cite{GH} that
initial data $u_0 \in C^2(\mathbb{R})$ with $1 - 3 u''_0(x) > 0$ for all
$x \in \mathbb{R}$ generate global solutions of the reduced
Ostrovsky equation (\ref{redOst}), whereas  a sign change of
this function on the real line inevitably
leads to wave breaking in  finite time.

This paper is devoted to the rigorous proof of the first part of this conjecture,
that is,  global solutions exist for all initial data $u_0 \in H^3(\mathbb{R})$
such that $1 - 3 u_0''(x) > 0$ for all $x \in \mathbb{R}$.
Note here that if $u_0 \in H^3(\mathbb{R})$, then $u_0 \in C^2(\mathbb{R})$,
hence the function $1 - 3 u_0''(x)$ is continuous for all $x \in \mathbb{R}$
and approaches $1$ as $x \to \pm \infty$.
The second part of the conjecture is also discussed and a weaker statement
in line with this conjecture is proven.

Integrability of the reduced Ostrovsky equation was discovered
first by Vakhnenko \cite{Vakh}. In a series of papers \cite{Vakh1,Vakh2,Vakh3},
Vakhnenko, Parkes and collaborators  found and explored a
transformation of the reduced Ostrovsky equation to
the integrable {\em Hirota--Satsuma equation} with  reversed roles of the
variables $x$ and $t$. As a particular application of the
power series expansions \cite{Satsuma}, one can generate
a hierarchy of conserved quantities for the reduced Ostrovsky equation
(\ref{redOst}). This hierarchy includes the first two conserved quantities
\begin{equation}
\label{conserv-quant}
E_0 = \int_{\mathbb{R}} u^2 dx, \quad
E_{-1} = \int_{\mathbb{R}} \left[ (\partial_x^{-1} u)^2 + \frac{1}{3} u^3 \right] dx,
\end{equation}
where the anti-derivative operator is defined by the integration of $u(x,t)$ in $x$ subject to the zero-mass
constraint $\int_{\mathbb{R}} u(x,t) dx = 0$.

Higher-order conserved quantities $E_{-1}$, $E_{-2}$, and so on
involve higher-order anti-derivatives, which are defined under additional constraints on the solution $u$.
Hence,  these conserved quantities are not related to the $H^s$-norms
for positive $s$ and play no role in the study of global well-posedness of
the reduced Ostrovsky equation (\ref{redOst}) in Sobolev space  $H^s(\mathbb{R})$ for $s > \frac{3}{2}$.
Note in passing that the global well-posedness of the regular Hirota--Satsuma equation in the energy
space $H^1(\mathbb{R})$ was considered recently in \cite{iorio}.

However, a different transformation has recently been discovered for the reduced Ostrovsky equation (\ref{redOst}). 
This transformation is useful to generate a bi-infinite sequence of conserved quantities,
which are more suitable for the proof of global existence.

The alternative formulation of the integrability scheme for the reduced Ostrovsky equation starts
with the work of Hone and Wang \cite{HW}, where the reduced Ostrovsky equation
(\ref{redOst}) was obtained as a short-wave limit of the integrable {\em Degasperis--Procesi equation}.
As a result of the asymptotic reduction, these authors obtained the following Lax operator pair
for the reduced Ostrovsky equation (\ref{redOst}) in the original space and time variables:
\begin{equation}
\label{Lax}
\left\{ \begin{array}{cc} 3 \lambda \psi_{xxx} + (1 - 3 u_{xx}) \psi = 0, \\
\psi_t + \lambda \psi_{xx} + u \psi_x - u_x \psi = 0, \end{array} \right.
\end{equation}
where $\lambda$ is a spectral parameter. Note that the function $1 - 3 u_{xx}$ arises
naturally in the third-order eigenvalue problem (\ref{Lax}) in the same way as
the function $m = u - u_{xx}$ arises in another
integrable {\em Camassa--Holm equation} to determine if the global solutions or
wave breaking will occur in the Cauchy problem \cite{Constantin1,Constantin}.

More recently, based on an earlier study of Manna \& Neveu \cite{Manna}, Kraenkel {\em et al.} \cite{KLM}
found a transformation between the reduced Ostrovsky equation
(\ref{redOst}) and the integrable {\em Bullough--Dodd equation}, which is
also widely known as the {\em Tzitz\'eica equation} after its original derivation in 1910 \cite{Tzitzeica}.
In new characteristic variables $Y$ and $T$ (see section 2),
the Tzitz\'eica equation can be written in the form,
\begin{equation}
\label{Tzitzeica}
\frac{\partial^2 V}{\partial T \partial Y} = e^{-2V} - e^V.
\end{equation}
Note that the Tzitz\'eica equation is similar to the {\em sine--Gordon equation}
in characteristic coordinates, which arises in the integrability
scheme of the short-pulse equation \cite{PelSak}. Similarly to the
sine--Gordon equation, the Tzitz\'eica equation has
a bi-infinite sequence of conserved quantities, which was
discovered in two recent and independent works \cite{AAGZ,BS}.
Among those, we only need the first two conserved quantities
\begin{equation}
\label{conserv-quant-2}
Q_1 = \int_{\mathbb{R}} \left( 2 e^{V} + e^{-2V} - 3 \right) dY, \quad
Q_2 = \int_{\mathbb{R}} \left( \frac{\partial V}{\partial Y} \right)^2 dY,
\end{equation}
which were obtained from the power series expansions \cite{AAGZ}.
The conserved quantities (\ref{conserv-quant-2}) are related to the conserved quantities of the
reduced Ostrovsky equation (\ref{redOst}) in original physical variables
\begin{equation}
\label{conserv-quant-3}
E_1 = \int_{\mathbb{R}} \left[ \left( 1 - 3 u_{xx} \right)^{1/3} - 1 \right] dx, \quad
E_2 = \int_{\mathbb{R}} \frac{(u_{xxx})^2}{(1 - 3 u_{xx})^{7/3}} dx.
\end{equation}
Note that the conserved quantities (\ref{conserv-quant-3}) also appeared in
the balance equations derived in \cite{KLM}.

In Section 2, we shall use the conserved quantities $E_0$ in
(\ref{conserv-quant}) and $Q_1$, $Q_2$ in (\ref{conserv-quant-2}),
as well as the reduction to the Tzitz\'eica equation (\ref{Tzitzeica}),
to prove our main result, which is,

\begin{theorem}
Assume $u_0 \in H^3(\mathbb{R})$ such that $1 - 3 u_0''(x) > 0$ for all $x \in \mathbb{R}$.
Then, the reduced Ostrovsky equation (\ref{redOst}) admits a unique global solution
$u \in C(\mathbb{R}_+,H^3(\mathbb{R}))$.
\label{theorem-main}
\end{theorem}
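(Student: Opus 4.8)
The plan is to combine the local solution theory of \cite{SSK10} with a time-global a priori bound on $\|u(t)\|_{H^3}$ extracted from the conserved quantities and the Tzitz\'eica reduction. Since $H^3(\mathbb{R}) \hookrightarrow H^s(\mathbb{R})$ for some $s \in (3/2,3]$, the results of \cite{SSK10} provide a unique maximal solution $u \in C([0,T_{\max}),H^3(\mathbb{R}))$ together with the continuation criterion that $T_{\max} < \infty$ forces $\limsup_{t \uparrow T_{\max}} \|u(t)\|_{H^3} = \infty$ (realized, by \cite{Hu,LPS}, through a Burgers-type blow-up of $u_x$ along characteristics). It therefore suffices to bound $\|u(t)\|_{H^3}$ uniformly on $[0,T_{\max})$.

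First I would record the structural fact that the sign of $w := 1 - 3u_{xx}$ is preserved by the flow: differentiating (\ref{redOst}) twice in $x$ and substituting $u_{xx} = (1-w)/3$ yields the transport identity $w_t + u w_x + 3 u_x w = 0$, so that along the characteristics $\dot x = u$ one has $w(x(t),t) = w_0(x(0))\exp\!\big(-3\int_0^t u_x \, ds\big)$. On every interval $[0,T] \subset [0,T_{\max})$ the factor $u_x$ is bounded (as $u \in C([0,T],H^3)$), hence $w(t,\cdot) > 0$ throughout $[0,T_{\max})$. Consequently the change of variables of Section 2 to the Tzitz\'eica equation (\ref{Tzitzeica}) is a legitimate, orientation-preserving transformation on $[0,T_{\max})$, and $E_0$ in (\ref{conserv-quant}) together with $Q_1,Q_2$ in (\ref{conserv-quant-2}) (equivalently $E_1,E_2$ in (\ref{conserv-quant-3})) are conserved along the solution. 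The hypothesis $u_0 \in H^3(\mathbb{R})$ with $1 - 3u_0''(x) > 0$ for all $x$ guarantees, since $1-3u_0''$ is continuous with limit $1$ at $\pm\infty$, that $\delta_0 := \inf_x(1-3u_0'') > 0$ and $\sup_x(1-3u_0'') < \infty$, so all of these conserved quantities are finite at $t=0$.

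Next comes the key a priori estimate, carried out in the Tzitz\'eica variables. Since $u_{xx} \to 0$ at $\pm\infty$, one has $V(\cdot,t) \to 0$ at $\pm\infty$; and $g(V) := 2e^{V} + e^{-2V} - 3 \ge 0$ with $g(V)=0$ only at $V=0$ and $g(V) \to \infty$ as $V \to \pm\infty$ (AM--GM). Setting $F(V) = \int_0^V \sqrt{g(s)}\,ds$, a strictly increasing bijection of $\mathbb{R}$, I would estimate, for every $Y$,
\begin{equation*}
|F(V(Y,T))| = \left| \int_{-\infty}^{Y} \sqrt{g(V)}\, V_{Y'} \, dY' \right| \le \left( \int_{\mathbb{R}} g(V)\, dY' \right)^{1/2} \left( \int_{\mathbb{R}} V_{Y'}^2 \, dY' \right)^{1/2} = Q_1^{1/2}\, Q_2^{1/2}.
\end{equation*}
As $Q_1,Q_2$ are conserved, this bounds $\|V(\cdot,T)\|_{L^\infty}$ uniformly in $T$ by a constant depending only on $u_0$; transporting this back through the change of variables (a homeomorphism of the $Y$-line onto the $x$-line, hence sup-preserving) pins $w(t,\cdot) = 1 - 3u_{xx}(t,\cdot)$ into a fixed compact subinterval $[\delta,\delta^{-1}] \subset (0,\infty)$ for all $t \in [0,T_{\max})$, with $\delta = \delta(u_0) > 0$. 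In particular $\|u_{xx}(t)\|_{L^\infty}$ is bounded. Feeding $\delta \le w \le \delta^{-1}$ back into the conserved quantities now closes the estimate: the function $h(w) := 1 - w^{1/3} - \tfrac13(1-w)$ is strictly convex with a minimum $h(1)=0$, so $h(w) \ge c_\delta(w-1)^2$ on $[\delta,\delta^{-1}]$, while $\int_{\mathbb{R}}(1-w)\,dx = 3\int_{\mathbb{R}} u_{xx}\,dx = 0$ gives $-E_1 = \int_{\mathbb{R}} h(w)\,dx \ge 9 c_\delta \|u_{xx}\|_{L^2}^2$; similarly $E_2 \ge \delta^{7/3}\|u_{xxx}\|_{L^2}^2$ and $E_0 = \|u\|_{L^2}^2$; and the interpolation $\|u_x\|_{L^2}^2 \le \|u\|_{L^2}\|u_{xx}\|_{L^2}$ then bounds $\|u\|_{L^2}, \|u_x\|_{L^2}, \|u_{xx}\|_{L^2}, \|u_{xxx}\|_{L^2}$ uniformly on $[0,T_{\max})$. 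This contradicts the continuation criterion unless $T_{\max} = \infty$, which proves global existence; uniqueness is inherited from \cite{SSK10}.

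The main obstacle I expect is not any of these estimates but the rigorous setup of the equivalence between the two Cauchy problems on the whole of $[0,T_{\max})$: one must verify that the hodograph-type transformation carrying (\ref{redOst}) to (\ref{Tzitzeica}) is well-defined, invertible and regularity-preserving for as long as the $H^3$ solution lives (using the propagated positivity of $w$, but also sufficient smoothness and decay of $u$ and $V$), and that the formal conservation laws (\ref{conserv-quant-2})--(\ref{conserv-quant-3}) are genuinely constant along the solution at this level of regularity. The cleanest route is to establish everything first for smooth, rapidly decaying data (for which all the integrations by parts are justified) and then pass to the limit using the continuous dependence in \cite{SSK10}, noting that the condition $1 - 3u_0'' > 0$ is open in $H^3(\mathbb{R})$. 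A secondary technical point is the convergence of the integrals defining $E_1$ and $E_2$ for general $H^3$ data: $E_1$ is only conditionally convergent and should be handled via the zero-mass renormalization $\int_{\mathbb{R}}(1-w)\,dx = 0$ used above, which must be tracked throughout.
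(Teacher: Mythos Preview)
Your proof is correct and follows the same overall scheme as the paper: local existence from \cite{SSK10}, reduction to the Tzitz\'eica equation, and global control via the conserved quantities $E_0,Q_1,Q_2$. The execution differs in two minor but worthwhile ways. First, you establish positivity of $w=1-3u_{xx}$ directly from the transport identity $w_t+uw_x+3u_xw=0$ along characteristics, whereas the paper recovers it a posteriori from the bound on $\|V\|_{H^1}$ (which keeps $e^{-3V}>0$); your argument is more elementary and makes the persistence of the sign condition transparent without invoking the change of variables. Second, for the actual $H^3$ estimate the paper stays entirely in the Tzitz\'eica variable, using the convexity bound $2e^{V}+e^{-2V}-3\geq V^2$ (from $H''\geq 2$) to get $\|V\|_{H^1_Y}^2\leq Q_1+Q_2$ and then pulling this back to $u_{xx}\in H^1_x$ via the relation $u_{xx}=Vg(V)$ and the coordinate change; you instead use the Agmon-type inequality only to pin $\|V\|_{L^\infty}$ (hence $w\in[\delta,\delta^{-1}]$) and then read the $L^2$ bounds on $u_{xx}$ and $u_{xxx}$ directly off the $x$-variable quantities $E_1,E_2$. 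The paper's route is a bit shorter and sidesteps the conditional convergence of $E_1$ that you flag at the end; your route has the compensating advantage that the coercivity of $E_1,E_2$ is visible in the original physical variables without tracking the Jacobian of the $Y\leftrightarrow x$ map in the $L^2$ transfer.
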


It is natural to expect that the finite-time wave
breaking occurs for any $u_0 \in H^3(\mathbb{R})$ such that
$1 - 3 u_0''(x)$ changes sign for some $x \in \mathbb{R}$.
As shown in \cite{GH} in a periodic setting,
this criterion of wave breaking is sharper than the previous
criteria of wave breaking in \cite{Hu,LPS}.
Although we are not able to give a full proof of
this sharp criterion in the present work, we shall
prove the following weaker statement in Section 3:

\begin{theorem}
Assume that $u_0 \in H^3(\mathbb{R})$ is given
and there is a finite interval $[X_-,X_+]$ and a point $X_0 \in (X_-,X_+)$
such that
\begin{equation}
\label{assumption-1}
1 - 3 u_0''(x) < 0, \quad x \in (X_-,X_+),
\end{equation}
and
\begin{equation}
\label{assumption-2}
u_0'(x) < 0, \quad x \in (X_-,X_0), \qquad u_0'(x) > 0, \quad x \in (X_0,X_+),
\end{equation}
whereas $1 - 3 u_0''(x) \geq 0$ for all $x \leq X_-$ and $x \geq X_+$.
Then, a local solution $u \in C([0,t_0),H^3(\mathbb{R}))$ of
the reduced Ostrovsky equation (\ref{redOst}) breaks in a finite time
$t_0 \in (0,\infty)$ in the sense
$$
\limsup_{t \uparrow t_0} \| u(\cdot,t) \|_{H^3(\mathbb{R})} = \infty
$$
if $u_x(x_-(t),t) < 0$ and $u_x(x_+(t),t) > 0$ hold for all $t \in [0,t_0)$
along the characteristics
$x = x_{\pm}(t)$ originating from $x_{\pm}(0) = X_{\pm}$.
\label{theorem-new}
\end{theorem}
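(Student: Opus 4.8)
The plan is to argue by contradiction: suppose the local solution $u \in C([0,t_0),H^3(\mathbb{R}))$ extends with bounded $H^3$-norm up to $t_0$, where $t_0$ is the maximal existence time and is finite. We want to derive from the hypotheses a blow-up of some lower-order quantity along a characteristic, which forces $\|u(\cdot,t)\|_{H^3} \to \infty$ and contradicts the assumed bound. The natural objects to track are the characteristics $\dot x = u(x,t)$ and the quantity $p(x,t) := 1 - 3u_{xx}(x,t)$, which, as noted around~(\ref{Lax}), plays the role that $m = u - u_{xx}$ plays for Camassa--Holm. First I would differentiate the reduced Ostrovsky equation~(\ref{redOst}) written as $u_{tx} + u_x^2 + u u_{xx} = u$ once more in $x$ to get an evolution equation for $u_{xx}$, hence for $p$, along the characteristic $x = x_\pm(t)$. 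The expected form is $\frac{d}{dt} p = c\, u_x(x_\pm(t),t)\, p$ for some constant $c$ (I would check the sign and constant; it should come out so that the exponential factor is controlled), so that $p(x_\pm(t),t) = p(X_\pm,0) \exp\!\big(c \int_0^t u_x(x_\pm(s),s)\,ds\big)$. Since $p(X_\pm,0) = 1 - 3u_0''(X_\pm) \ge 0$ by hypothesis and $u_x$ has a fixed sign along each of $x_\pm$ by the standing assumption of the theorem, $p$ keeps its sign on these two characteristics for all $t \in [0,t_0)$.

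The second step is to exploit the monotonicity structure. Because $x_-(t)$ and $x_+(t)$ bound an interior region and the ODE flow preserves the order of initial points, the characteristics emanating from $(X_-,X_+)$ foliate a region where, by~(\ref{assumption-1}), $p(\cdot,0) < 0$; and the same exponential formula shows $p$ stays negative there, at least as long as the relevant $u_x$ values have the sign forced by~(\ref{assumption-2}) (which, on the boundary characteristics, is exactly the hypothesis; in the interior it needs a short continuity/bootstrap argument I would spell out). Meanwhile on $x \le X_-$ and $x \ge X_+$ the data give $p \ge 0$. The point is that $p$ cannot pass continuously from negative to nonnegative values without vanishing, but vanishing would already mean $u_{xx} = 1/3$, which is fine — the real obstruction comes from combining the sign of $p$ with the characteristic ODE for $u_x$ itself. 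Differentiating~(\ref{redOst}) once in $x$ gives, along $\dot x = u$, an equation of Riccati type $\frac{d}{dt} u_x = -u_x^2 + u - u u_{xx}$, i.e. $\frac{d}{dt} u_x = -u_x^2 + \tfrac{1}{3}(1 + 2u) - \tfrac{1}{3} u\, p \;+\;\dots$ — I would rearrange this using $u_{xx} = (1-p)/3$ so that the term $-u u_{xx}$ becomes $-\tfrac{u}{3}(1-p)$, and then on the region where $p<0$ and after using boundedness of $\|u\|_{L^\infty} \le C\|u\|_{H^3}$, the right side is bounded above by $-u_x^2 + C$ on a suitable sub-region, or better, by using the precise combination, by $-u_x^2$ minus a positive multiple of something that grows. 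Tracking this carefully on the interior characteristic through $X_0$ (where $u_x$ starts negative and on one side / positive on the other) should force $u_x \to -\infty$ in finite time, which by the $H^3 \hookrightarrow C^2$ embedding yields $\limsup_{t\uparrow t_0}\|u(\cdot,t)\|_{H^3} = \infty$.

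I expect the main obstacle to be making the interior bootstrap rigorous: the theorem's hypothesis only controls the signs of $u_x$ on the two boundary characteristics $x_\pm(t)$, so one must propagate the sign information from the boundary into the interior region, show that the characteristics $x_-(t)$ and $x_+(t)$ do not collide before $t_0$ (they can't, by uniqueness of the ODE flow, as long as $u$ is $C^1$, which holds on $[0,t_0)$), and verify that the foliation by characteristics remains a genuine foliation — equivalently that $\partial_a x(t;a) > 0$, whose evolution is itself governed by $u_x$ and so is exactly where the sign hypothesis is used. A secondary technical point is justifying the differentiation of~(\ref{redOst}) and the characteristic calculus at the $H^3$ regularity level, which is standard (mollify, or use that $u \in C^2$ and $u_t \in C^1$ on the existence interval) but should be stated. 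Finally I would close the contradiction: if $\|u(\cdot,t)\|_{H^3}$ stayed bounded up to the finite $t_0$, the local well-posedness result of~\cite{SSK10} would let the solution be continued past $t_0$, contradicting that $u_x$ (hence $\|u\|_{C^2}$, hence $\|u\|_{H^3}$) blows up as $t \uparrow t_0$.
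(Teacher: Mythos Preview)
Your proposal has a genuine gap, rooted in a miscomputation of the characteristic ODE for $u_x$. Differentiating $(u_t + uu_x)_x = u$ once in $x$ gives $u_{tx} + u_x^2 + u u_{xx} = u$, so along $\dot x = u$ the material derivative is
\[
\frac{D}{Dt} u_x \;=\; u_{xt} + u\,u_{xx} \;=\; \bigl(u - u_x^2 - u u_{xx}\bigr) + u u_{xx} \;=\; u - u_x^2.
\]
The term $-u u_{xx}$ cancels, so $p = 1 - 3u_{xx}$ does \emph{not} appear in this Riccati equation, and your plan to ``rearrange using $u_{xx} = (1-p)/3$'' to extract a negative contribution from the region $p<0$ cannot work. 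What remains is $\frac{D}{Dt}u_x = u - u_x^2$, which does not blow up from the data you have: at $X_0$ the hypothesis (\ref{assumption-2}) gives $u_0'(X_0)=0$, and the forcing $u$ has no sign control (conservation of $E_0$ bounds only $\|u\|_{L^2}$, and assuming bounded $H^3$-norm gives $|u|\le C$ but then $\frac{D}{Dt}u_x \le -u_x^2 + C$ only blows up if $u_x$ first reaches $-\sqrt{C}$, which you have no mechanism to force). Your exponential formula $\frac{D}{Dt}p = -3u_x\,p$ is correct and does preserve the sign of $p$ along characteristics, but by itself this gives no finite-time singularity.

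The paper's argument is structurally different. It works in the characteristic coordinates $(X,T)$ with Jacobian $\phi$ and uses the conservation $(F\phi)_T=0$, $F=(1-3u_{xx})^{1/3}$, to derive the identity
\[
\frac{\partial \phi}{\partial X} \;=\; \frac{1}{3}\,\phi(X,T)\int_0^T \phi(X,T')\Bigl(1 - \frac{F_0^3(X)}{\phi^3(X,T')}\Bigr)\,dT'.
\]
On $(X_-,X_+)$ one has $F_0^3<0$, so the integrand is positive; setting $\psi(X,T)=\int_0^T\phi(X,T')\,dT'$ this yields a \emph{spatial} Riccati inequality $\partial_X\psi \ge \tfrac{1}{6}\psi^2$. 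The role of assumption (\ref{assumption-2}) and of the sign hypotheses on $u_x(x_\pm(t),t)$ is only to trap a level curve $\phi(\xi(T),T)=1$ inside a fixed subinterval $[\xi_-,\xi_+]\subset(X_-,X_+)$, so that $\psi(\xi(T),T)$ grows like $T$; integrating the spatial Riccati from $\xi(T)$ to $X$ then forces $\psi(X,T)\to\infty$ at some $X\in(\xi_+,X_+)$ once $T>6/(X_+-\xi_+)$, hence $\phi$ (and therefore $u_x$, via $\phi = \exp\int_0^T u_x\,dT$) blows up. The blow-up mechanism is a Riccati in the \emph{characteristic label} $X$ over a fixed-length interval, not a temporal Riccati along a single characteristic; this is what allows the argument to avoid the uncontrolled $+u$ forcing that kills your approach.
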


A prototypical example of the initial data for the reduced Ostrovsky equation
on the infinite line is the first Hermite function $u_0(x) = x e^{-a x^2}$,
where $a > 0$ is a parameter. A straightforward computation of the maximum of
$u_0''(x)$ shows that $u_0''(x) < \frac{1}{3}$ for all $x \in \mathbb{R}$
if $a \in (0,a_0)$, where
$$
a_0 = \frac{e^{3-\sqrt{6}}}{108 (3 - \sqrt{6})} \approx 0.0292.
$$
In this case, Theorem \ref{theorem-main} implies global existence of solutions for such initial data.
When $a > a_0$, condition (\ref{assumption-1}) is satisfied. In addition, $u_0(x)$
has a global minimum at $x = -\frac{1}{\sqrt{2a}}$ so that
condition (\ref{assumption-2}) is satisfied for $a > a_* = \frac{e}{72} \approx 0.0378$. (Note that
$a_* > a_0$.) Theorem \ref{theorem-new} implies wave breaking in
a finite time provided that {\em additional} constraints are satisfied, that is,
$u_x(x_-(t),t) < 0$ and $u_x(x_+(t),t) > 0$ hold for all times before the wave
breaking time along the characteristics $x = x_{\pm}(t)$
originating from $x_{\pm}(0) = X_{\pm}$. Although we strongly believe that these additional constraints
as well as condition (\ref{assumption-2}) are not needed for the statement
of Theorem \ref{theorem-new}, we were not able to lift out these technical restrictions.

The initial function $u_0(x) = x e^{-a x^2}$ for $a > a_*$ is shown on Fig. \ref{fig-Scheme},
where the points $X_-$, $X_+$, and $X_0$ introduced in Theorem \ref{theorem-new}
are also shown.
\begin{figure}
\begin{center}
\includegraphics[width=0.65\textwidth]{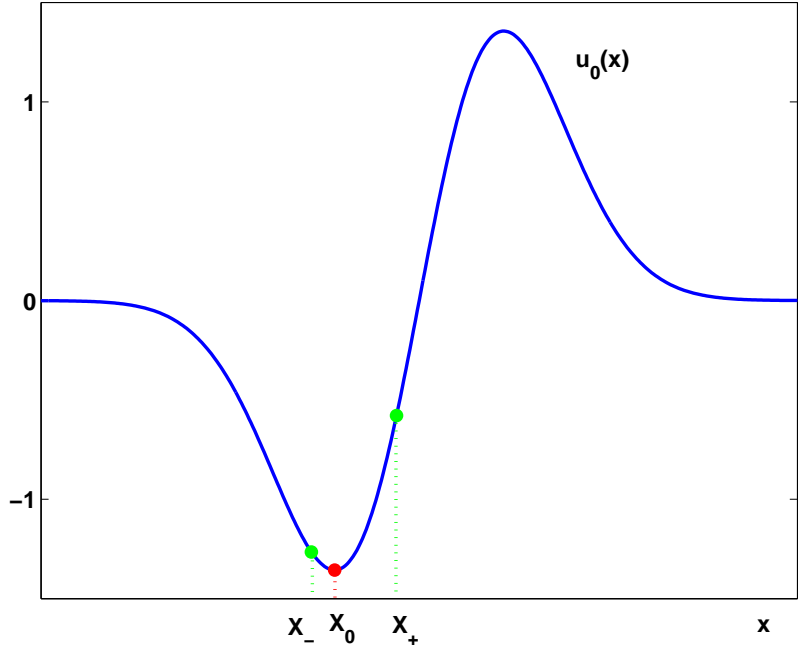}
\end{center}
\caption{An example of the initial condition $u_0(x) = x e^{-ax^2}$ with $a = 0.1$,
where positions of $X_-$, $X_+$, and $X_0$ are shown.}
\label{fig-Scheme}
\end{figure}

\section{Proof of Theorem \ref{theorem-main}}

We introduce characteristic coordinates for the
reduced Ostrovsky equation (\ref{redOst}) \cite{GH,LPS,Vakh}:
\begin{equation}
\label{characteristics}
x = X + \int_0^T U(X,T') dT', \quad t = T, \quad u(x,t) = U(X,T).
\end{equation}
The coordinate tranformation is one-to-one and onto
if the Jacobian
\begin{equation}
\label{Jacobian}
\phi(X,T) = 1 + \int_0^T U_X(X,T') dT',
\end{equation}
which is positive for $T = 0$ because $\phi(X,0) = 1$,
remains positive for all $(X,T) \in \mathbb{R} \times [0,T_0]$,
where $T_0 > 0$ is the local existence time.

We note that the original equation (\ref{redOst}) yields the relation
\begin{eqnarray}
\label{relation-1}
u & = & \frac{U_{XT}}{\phi} = \frac{\phi_{TT}}{\phi},
\end{eqnarray}
whereas the transformation formulas (\ref{characteristics}) and (\ref{Jacobian}) yield
the relations
\begin{eqnarray}
\label{relation-2}
u_x & = & \frac{U_X}{\phi} = \frac{\phi_T}{\phi},
\end{eqnarray}
and
\begin{eqnarray}
\label{relation-3}
u_{xx} & = & \frac{1}{\phi} \left( \frac{U_X}{\phi} \right)_X = \frac{\phi_{TX} \phi - \phi_X \phi_T}{\phi^3}.
\end{eqnarray}
Next, in accordance with  \cite{KLM}, we introduce the variable
\begin{equation}
\label{variable-F}
f = (1 - 3 u_{xx})^{1/3} \,.
\end{equation}
If $u$ satisfies the reduced Ostrovsky equation (\ref{redOst}),
then $f$ satisfies the balance equation
\begin{equation}
\label{balance}
f_t + (u f)_x = \frac{\{u - u_{xt} - (uu_x )_x \}_x }{f^{2/3}} = 0 \,.
\end{equation}
In characteristic coordinates (\ref{characteristics}), we set $f(x,t) = F(X,T)$,
use equation (\ref{relation-2}), and  rewrite the balance equation (\ref{balance}) in the equivalent form
\begin{equation}
\label{conservation-F}
(F \phi)_T = 0 \quad \Rightarrow  \quad F(X,T) \phi(X,T) =  F_0(X),
\end{equation}
where $F_0(X) = F(X,0)$.
Using equations (\ref{relation-3}) and (\ref{variable-F}), we obtain the evolution equation for $F(X,T)$:
\begin{equation}
\label{evolution}
\frac{\partial^2}{\partial T \partial X} \log(F)
= -\frac{\partial^2}{\partial T \partial X} \log(\phi)
= \frac{1}{3} \phi (F^3 - 1) = \frac{1}{3} F_0(X) (F^2 - F^{-1}).
\end{equation}

We shall now consider the Cauchy problem for the reduced Ostrovsky equation (\ref{redOst})
with  initial data $u_0 \in H^3(\mathbb{R})$. By the local well-posedness result \cite{SSK10},
there exists a unique local solution of the reduced Ostrovsky equation
in class $u \in C([0,t_0],H^3(\mathbb{R}))$ for some $t_0 > 0$.
By Sobolev embedding of $H^3(\mathbb{R})$ into $C^2(\mathbb{R})$,
the function $f_0(x) := (1 - 3 u_0''(x))^{1/3}$ is continuous, bounded, and
satisfies $f_0(x) \to 1$ as $|x| \to \infty$.

To prove Theorem \ref{theorem-main}, we further require that $f_0(x) > 0$
for all $x \in \mathbb{R}$, which means
from the above properties that $\inf_{x \in \mathbb{R}} f_0(x) > 0$.
Because $x = X$ for $t = T = 0$, we have $F_0 \in C(\mathbb{R})$
such that $\inf_{X \in \mathbb{R}} F_0(X) > 0$. In this case, the transformation
from $X$ to $Y$ defined by
\begin{equation}
\label{coordinates}
Y := -\frac{1}{3} \int_0^X F_0(X') d X'
\end{equation}
is one-to-one and onto for all $X \in \mathbb{R}$, because the Jacobian of the transformation is
$-\frac{1}{3} F_0(X) < 0$ and $F_0(X) \to 1$ as $|X| \to \infty$. The change of variable,
\begin{equation}
\label{change-coordinates}
F(X,T) = e^{-V(Y,T)},
\end{equation}
transforms the evolution equation (\ref{evolution}) to the
integrable Tzitz\'eica equation (\ref{Tzitzeica}).

We can now transfer the well-posedness result for local solutions of the reduced
Ostrovsky equation (\ref{redOst}) to local solutions of the Tzitz\'eica equation (\ref{Tzitzeica}).

\begin{lemma}
\label{lemma-local-existence}
Assume $u_0 \in H^3(\mathbb{R})$ such that $1 - 3 u_0''(x) > 0$ for all $x \in \mathbb{R}$.
Let
$$
V_0(Y) := -\frac{1}{3} \log(1 - 3 u_0''(x)), \quad Y := -\frac{1}{3} \int_0^x (1 - 3 u_0''(x'))^{1/3} dx'.
$$
There exists a unique local solution of the Tzitz\'eica equation (\ref{Tzitzeica})
in class $V \in C([0,T_0],H^1(\mathbb{R}))$ for some $T_0 > 0$ such that $V(Y,0) = V_0(Y)$.
\end{lemma}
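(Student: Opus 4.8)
The plan is to transfer the local well-posedness already available for the reduced Ostrovsky equation in $H^3(\mathbb{R})$ (from \cite{SSK10}) through the explicit chain of transformations set up above, and then check that the Tzitz\'eica data $V_0$ and solution $V$ land in the claimed regularity class. First I would record that, under the hypothesis $1 - 3u_0''(x) > 0$ for all $x$ together with $u_0 \in H^3(\mathbb{R}) \hookrightarrow C^2(\mathbb{R})$, the function $f_0(x) = (1-3u_0''(x))^{1/3}$ is continuous, bounded, bounded away from zero, and tends to $1$ as $|x| \to \infty$; hence $m_0 := \inf_{x} f_0(x) > 0$. This makes the map $x \mapsto Y = -\frac13 \int_0^x f_0(x')\,dx'$ a $C^1$-diffeomorphism of $\mathbb{R}$ onto $\mathbb{R}$ with derivative $dY/dx = -\frac13 f_0(x)$ bounded above and below in absolute value; its inverse is Lipschitz. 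Consequently $V_0(Y) = -\frac13\log(1-3u_0''(x(Y)))$ is a well-defined bounded continuous function of $Y$, and $V_0 \to 0$ as $|Y| \to \infty$.

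Next I would verify $V_0 \in H^1(\mathbb{R})$. Since $u_0 \in H^3$, we have $u_0'' \in H^1$, so $u_0''$ is bounded, continuous, decays at infinity, and $u_0''' \in L^2$. Writing $f_0^3 = 1 - 3u_0''$, the chain rule gives $\partial_x V_0 = -\frac13 \cdot \frac{-3u_0'''}{1-3u_0''} = \frac{u_0'''}{f_0^3}$ as a function of $x$, and then $\partial_Y V_0 = (\partial_x V_0)\cdot \frac{dx}{dY} = -\frac{3 u_0'''}{f_0^3}\cdot\frac{1}{f_0} = -\frac{3u_0'''}{f_0^4}$. Because $f_0 \geq m_0 > 0$, this is $L^2$ in $x$; and since $dx/dY$ is bounded, a change of variables shows it is also $L^2$ in $Y$. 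Boundedness of $V_0$ then gives $V_0 \in H^1(\mathbb{R})$.

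For the solution side, by \cite{SSK10} there is a unique local solution $u \in C([0,t_0],H^3(\mathbb{R}))$. On a possibly smaller interval $[0,T_0]$ the Jacobian $\phi(X,T)$ in (\ref{Jacobian}) stays positive (it equals $1$ at $T=0$ and is continuous), so the characteristic change of variables (\ref{characteristics}) is a diffeomorphism; combined with (\ref{conservation-F}), $F = F_0/\phi$ inherits positivity and $H^1$-type regularity in $X$ uniformly on $[0,T_0]$ from the corresponding properties of $u(\cdot,T) \in H^3$, via the identities (\ref{relation-2})–(\ref{relation-3}) and (\ref{variable-F}). Since $F_0$ is bounded away from zero, the $X \mapsto Y$ map (\ref{coordinates}) is a fixed diffeomorphism, and $V(Y,T) = -\log F(X(Y),T)$ then lies in $C([0,T_0],H^1(\mathbb{R}))$ with $V(\cdot,0) = V_0$; the computation (\ref{evolution})–(\ref{change-coordinates}) shows it solves (\ref{Tzitzeica}). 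Uniqueness follows by running the chain of (invertible) transformations backwards: a second $H^1$ solution of Tzitz\'eica with the same data would produce, via $F = e^{-V}$ and the inverse characteristic transform, a second $H^3$ solution of (\ref{redOst}) with data $u_0$, contradicting uniqueness in \cite{SSK10}.

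The main obstacle is the bookkeeping of regularity through the coordinate change in the spatial variable: one must confirm that the $H^3(\mathbb{R})$ control of $u$ translates into genuine $H^1(\mathbb{R})$ control of $V$ uniformly in $T \in [0,T_0]$ — in particular that $\partial_Y V \in L^2$ with a bound that does not degenerate — and that the diffeomorphisms $x \leftrightarrow X \leftrightarrow Y$ neither blow up nor degenerate on the time interval where $\phi > 0$. This rests entirely on the two lower bounds $\inf_X F_0(X) > 0$ and $\inf_{X,T}\phi(X,T) > 0$, the latter being exactly what one gets by shrinking $t_0$; nothing deeper is needed, since global control of these quantities is the business of the subsequent global-existence argument rather than of this local lemma.
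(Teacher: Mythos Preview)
Your overall strategy---transfer the local well-posedness of the reduced Ostrovsky equation through the chain of coordinate changes---is exactly what the paper does, and your treatment of the diffeomorphisms and of $\partial_Y V_0 \in L^2$ is correct. There is, however, a genuine gap in your $L^2$ control of $V_0$ (and, by the same token, of $V(\cdot,T)$). You establish that $V_0$ is bounded, tends to zero at infinity, and that $\partial_Y V_0 \in L^2$, and then write ``Boundedness of $V_0$ then gives $V_0 \in H^1(\mathbb{R})$.'' This inference is false as stated: a bounded function on $\mathbb{R}$ with $L^2$ derivative need not lie in $L^2$ (e.g.\ $Y \mapsto (1+|Y|)^{-1/2}$), and the decay $V_0 \to 0$ does not rescue it. Your later claim that $V(\cdot,T) \in H^1$ inherits the same gap, since the sketch there only cites positivity and ``$H^1$-type regularity'' of $F$ without producing an $L^2$ bound for $V$.

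The paper closes this gap by factoring $V = u_{xx}\,G(u_{xx})$ with
\[
G(z) = \frac{\log(1-3z)}{-3z},
\]
noting that $G$ and $G'$ are bounded for $z < \tfrac{1}{3}$, and then using the change-of-variables identity
\[
\|W(\cdot,T)\|_{L^2(dY)}^2 \;=\; \frac{1}{3}\int_{\mathbb{R}} w^2(x,t)\,f(x,t)\,dx
\]
to bound $\|V(\cdot,T)\|_{L^2}$ directly by $\|u_{xx}(\cdot,t)\|_{L^2}$ times $L^\infty$ quantities; the same device controls $\|V_Y\|_{L^2}$ by $\|u_{xxx}\|_{L^2}$. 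Equivalently, the boundedness of $G$ is precisely the pointwise estimate $|V_0| \leq C\,|u_0''|$, which is the missing ingredient in your argument. Inserting this factorization (both at $T=0$ and for $T\in[0,T_0]$, where one needs $\sup_x u_{xx}(x,t) < \tfrac13$ rather than merely $\phi>0$) repairs the gap; the remainder of your proof then matches the paper.
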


\begin{proof}
We rewrite transformations (\ref{variable-F}) and (\ref{change-coordinates}) into the equivalent form,
$$
u_{xx}(x,t) = \frac{1}{3}\left(1 - f^3(x,t)\right) = \frac{1}{3} \left( 1 - e^{-3V(Y,T)} \right).
$$
The inverse of this transformation is
$$
V(Y,T) = -\frac{1}{3} \log\left(1 - 3 u_{xx}(x,t)\right).
$$
For local solutions of the reduced Ostrovsky equation (\ref{redOst}) in class
$u \in C([0,t_0],H^3(\R))$, we have
$u_{xx} \in C([0,t_0],H^1(\mathbb{R}))$ for some $t_0 > 0$ and
$u_{xx} \to 0$ as $|x| \to \infty$. We have further assumed
that $\sup_{x \in \mathbb{R}} u_0''(x) < \frac{1}{3}$, which implies that
there is $T_0 \in (0,t_0)$ such that
$\sup_{x \in \mathbb{R}} u_{xx}(x,t) < \frac{1}{3}$ for all $t \in [0,T_0]$.
Under the same condition, the transformation from $X$ to $Y$ is one-to-one and onto
for all $X \in \mathbb{R}$. Therefore, $V$ is well-defined for all $(Y,T) \in \mathbb{R} \times[0,T_0]$
and $V(Y,T) \to 0$ as $|Y| \to \infty$. By construction, $V$ is a solution of the
Tzitz\'eica equation (\ref{Tzitzeica}) and $V(Y,0) = V_0(Y)$.
It remains to show that $V$ is in class $V \in C([0,T_0],H^1(\mathbb{R}))$.

The variables $V$ and $u_{xx}$ are related by $V = u_{xx} G(u_{xx})$, where
$$
G(u_{xx}) := \frac{\log(1 - 3 u_{xx})}{(-3 u_{xx})}.
$$
Both the function $G$ and its first derivative $G'$ remain bounded in $L^{\infty}$ norm as long as
$$
\sup_{x \in \mathbb{R}} u_{xx}(x,t) < \frac{1}{3},
$$
which is satisfied for all $t \in [0,T_0]$.
Note that $G(z)$ is analytic in $z$ if $|z| < \frac{1}{3}$, but
we only need boundedness of $G(z)$ and $G'(z)$, which is achieved if $z < \frac{1}{3}$.

Next recall the transformations (\ref{characteristics}) and (\ref{coordinates}) for any function
$W(Y,T) = w(x,t)$,
\begin{eqnarray*}
\| W(\cdot,T) \|_{L^2}^2 & = & \int_{\mathbb{R}} W^2(Y,T) dY =
\frac{1}{3} \int_{\mathbb{R}} W^2(Y,T) F_0(X) dX \\
& = & \frac{1}{3} \int_{\mathbb{R}} \frac{w^2(x,t) F_0(X)}{\phi(X,T)} dx
= \frac{1}{3} \int_{\mathbb{R}} w^2(x,t) f(x,t) dx.
\end{eqnarray*}
Therefore,
$$
\| V(\cdot,T) \|_{L^2} \leq \frac{1}{\sqrt{3}} \| G(u_{xx}(\cdot,t)) \|_{L^{\infty}}
\| f(\cdot,t) \|_{L^{\infty}} \| u_{xx}(\cdot,t) \|_{L^2},
$$
which remains bounded as long as $\| u(\cdot,t) \|_{L^{\infty}}$
and $\| u_{xx}(\cdot,t) \|_{L^2}$ remain bounded.
Similarly, we can prove that $\| V_Y(\cdot,T) \|_{L^2}$ remains bounded as long as
$\| u(\cdot,t) \|_{L^{\infty}}$ and $\| u_{xxx}(\cdot,t) \|_{L^2}$ remain bounded.
Thus, we have $V \in C([0,T_0],H^1(\mathbb{R}))$ for some $T_0 > 0$.
\end{proof}

\begin{remark}
The Jacobian of the transformation from $(X,T)$ to
$(x,t)$ is given by (\ref{Jacobian}) and controlled by the relation (\ref{relation-2}).
Since $\phi(X,0) = 1$ and
\begin{equation}
\label{relation-phi-u}
\phi(X,T) = \exp \left(\int_0^T u_x(x(X,T),T) dT \right),
\end{equation}
we can see that there is $T_0 > 0$ such that $\phi(X,T) > 0$ for all $(X,T) \in \mathbb{R} \times [0,T_0]$.
Because
\begin{equation}
\label{relation-phi-F}
\phi(X,T) = \frac{F_0(X)}{F(X,T)} = F_0(X) e^{V(Y,T)},
\end{equation}
the condition $\phi(X,T) > 0$ remains true as long as $V(Y,T)$ remains bounded in $L^{\infty}$-norm.
\end{remark}

\begin{lemma}
\label{lemma-global-existence}
Let $V \in C([0,T_0],H^1(\mathbb{R}))$ for some $T_0 > 0$ be a unique local solution
of the Tzitz\'eica equation (\ref{Tzitzeica}).
Then, in fact, $V \in C(\mathbb{R}_+,H^1(\mathbb{R}))$.
\end{lemma}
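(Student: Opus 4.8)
The plan is to promote the local solution to a global one via an a priori bound on $\| V(\cdot,T) \|_{H^1(\R)}$ that does not deteriorate in $T$, extracted from the conserved quantities $Q_1$ and $Q_2$ in (\ref{conserv-quant-2}). The decisive elementary fact is that $Q_1$ is coercive: setting $g(s) := 2 e^{s} + e^{-2s} - 3$, one has $g(0) = g'(0) = 0$, $g''(0) = 6 > 0$, $g(s) > 0$ for $s \neq 0$, and exponential growth of $g$ as $|s| \to \infty$, so $s \mapsto g(s)/s^2$ (set equal to $3$ at $s = 0$) is continuous, strictly positive on $\R$, and tends to $+\infty$ at $\pm\infty$; hence $c_0 := \inf_{s \in \R} g(s)/s^2 > 0$ and $g(s) \geq c_0 s^2$ for all $s$. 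Since $V_0 = V(\cdot,0) \in H^1(\R)$ is bounded and decays at infinity (Lemma \ref{lemma-local-existence}), $Q_1$ and $Q_2$ are finite at $T = 0$, and a short computation using $V_{TY} = e^{-2V} - e^{V}$, the identity $2 V_Y ( e^{-2V} - e^{V}) = - \partial_Y ( e^{-2V} + 2 e^{V})$, and the boundary conditions $V, V_T \to 0$ as $|Y| \to \infty$ shows $\frac{d}{dT} Q_1 = \frac{d}{dT} Q_2 = 0$ along the solution. Combined with the coercivity bound, this gives, for every $T$ in the maximal interval of existence,
$$
\| V(\cdot,T) \|_{H^1}^2 = \| V(\cdot,T) \|_{L^2}^2 + \| V_Y(\cdot,T) \|_{L^2}^2 \leq c_0^{-1} Q_1 + Q_2 =: M^2 ,
$$
and hence $\| V(\cdot,T) \|_{L^{\infty}} \leq C M$ by the embedding $H^1(\R) \hookrightarrow L^{\infty}(\R)$.

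With this uniform bound in hand, I would close the argument by a continuation principle. Because $V$ stays bounded in $L^{\infty}$, the Jacobian $\phi = F_0 e^{V}$ in (\ref{relation-phi-F}) stays positive and bounded away from $0$ and $\infty$, so the characteristic change of variables remains nondegenerate and one recovers from $V$, on its interval of existence, the corresponding local $H^3(\R)$-solution $u$ of the reduced Ostrovsky equation, with $u_{xx} = \frac{1}{3}(1 - e^{-3V})$ and $u_{xxx} = -\frac{1}{3} e^{-4V} V_Y$ (using $\partial_x = -\frac{1}{3} e^{-V} \partial_Y$). Together with the change-of-variables identity $\int_{\R} h \, dx = 3 \int_{\R} h \, e^{V} \, dY$ from the proof of Lemma \ref{lemma-local-existence} and the elementary bounds $(1 - e^{-3s})^2 \leq C_M s^2$, $e^{-7s} \leq e^{7 C M}$ valid for $|s| \leq C M$, these yield $\| u_{xx}(\cdot,t) \|_{L^2}^2 \leq C_M \| V(\cdot,T) \|_{L^2}^2$ and $\| u_{xxx}(\cdot,t) \|_{L^2}^2 \leq C_M Q_2$; since $\| u(\cdot,t) \|_{L^2}^2 = E_0$ is conserved and $\| u_x \|_{L^2}^2 \leq \| u \|_{L^2} \| u_{xx} \|_{L^2}$, the norm $\| u(\cdot,t) \|_{H^3}$ stays bounded for as long as $V$ exists. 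If the maximal existence time $T_*$ of $V$ were finite, then $u$ would exist on $[0,T_*)$ with $\| u(\cdot,t) \|_{H^3}$ bounded, contradicting the blow-up alternative $\limsup_{t \uparrow T_*} \| u(\cdot,t) \|_{H^3} = \infty$ of the local well-posedness theory \cite{SSK10}; hence $u$, and with it $V$ (by applying Lemma \ref{lemma-local-existence} once more), extends beyond $T_*$. Therefore $T_* = +\infty$ and $V \in C(\R_+, H^1(\R))$.

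I expect the main obstacle to be the rigorous justification of the conservation of $Q_1$ and $Q_2$ at the $H^1$ regularity of $V$: the identities for $\frac{d}{dT} Q_1$ and $\frac{d}{dT} Q_2$ cannot be differentiated naively, so one should either establish them first for smooth data -- using persistence of higher Sobolev regularity for (\ref{redOst}) together with Lemma \ref{lemma-local-existence} -- and then pass to the limit, or verify directly that the $H^1$ solution is regular enough for the boundary terms to vanish. A secondary technical point is confirming that the correspondence between $H^3$ solutions of (\ref{redOst}) with $1 - 3 u_{xx} > 0$ and bounded $H^1$ solutions of (\ref{Tzitzeica}) is a genuine bijection as long as $V$ stays bounded, so that the blow-up alternative can legitimately be transferred.
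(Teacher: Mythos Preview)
Your proof is correct and follows the paper's strategy: the conserved quantities $Q_1$ and $Q_2$ yield a time-independent a priori bound on $\|V\|_{H^1}$, and continuation then gives global existence. Two minor differences are worth noting. First, the paper obtains the sharp coercivity constant $c_0 = 1$ directly, by observing that $H''(V) = 2e^V + 4e^{-2V} \geq 2$ for all $V \in \R$, whence $H(V) \geq V^2$ by Taylor's theorem with integral remainder; this is cleaner than your compactness argument for $\inf_{s} g(s)/s^2 > 0$. Second, for the continuation step the paper simply invokes a ``standard continuation technique'' for the Tzitz\'eica equation, whereas you route explicitly through the $H^3$ blow-up alternative for the reduced Ostrovsky equation from \cite{SSK10}. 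Your version is more explicit (and arguably more honest, since no independent local well-posedness theory for (\ref{Tzitzeica}) in $H^1$ is established in the paper), and in effect it merges the proof of Lemma \ref{lemma-global-existence} with that of Theorem \ref{theorem-main}, which the paper carries out separately in the next paragraph using exactly the same ingredients ($u_{xx}$ expressed through $V$, conservation of $E_0$, and the interpolation $\|u_x\|_{L^2}^2 \leq \|u\|_{L^2}\|u_{xx}\|_{L^2}$).
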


\begin{proof}
We shall use $Q_1$ and $Q_2$ in (\ref{conserv-quant-2}).
The quantities are well-defined for a local solution in class $V \in C([0,T_0],H^1(\mathbb{R}))$
and conserves in time for the Tzitz\'eica equation (\ref{Tzitzeica}), according to the standard approximation arguments in
Sobolev spaces.

To be able to use $Q_1$ for the control of $\| V(\cdot,T) \|_{L^2}$,
we note that the function $H(V) := 2 e^{V} + e^{-2V} - 3$
is convex near $V = 0$ with $H(0) = H'(0) = 0$ and
$$
H''(V) = 2 e^V + 4 e^{-2V} \geq 2, \quad V \in \mathbb{R}.
$$
Therefore, $H(V) \geq V^2$ for all $V \in \mathbb{R}$, so that
$$
\| V \|_{H^1}^2 = \| V \|_{L^2}^2 + \| V_Y \|_{L^2}^2 \leq Q_1 + Q_2.
$$
By a standard continuation technique, a local solution in class $V \in C([0,T_0],H^1(\mathbb{R}))$ is uniquely
continued into a global solution in class $V \in C(\mathbb{R}_+,H^1(\mathbb{R}))$.
\end{proof}

It remains to transfer results of Lemmas \ref{lemma-local-existence} and \ref{lemma-global-existence},
as well as the $L^2$ conservation of $E_0$ in (\ref{conserv-quant}) for the proof of Theorem \ref{theorem-main}. \\

\begin{proof1}{\em of Theorem \ref{theorem-main}.}
It follows from the proof in Lemma \ref{lemma-local-existence} that
$u_{xx} = V g(V)$, where
$$
g(V) := \frac{1 - e^{-3V}}{3V}.
$$
Both the function $g$ and its first derivative $g'$
remain bounded as long as $V$ remains bounded.

By Lemma \ref{lemma-global-existence}, $V \in C(\mathbb{R}_+,H^1(\mathbb{R}))$ and hence
$F(X,T) > 0$ for all $(X,T) \in \mathbb{R} \times \mathbb{R}_+$.
Therefore, $\phi(X,T) > 0$ for all $(X,T) \in \mathbb{R} \times \mathbb{R}_+$, so that the
transformation (\ref{characteristics}) is one-to-one and onto for all $(X,T) \in \mathbb{R} \times \mathbb{R}_+$.
Using the bounded functions $g$ and $g'$, we hence have $u_{xx} \in C(\mathbb{R}_+,H^1(\mathbb{R}))$.

Finally, conservation of $E_0$ in (\ref{conserv-quant}) and the elementary Cauchy--Schwarz inequality,
$$
\| u_x \|_{L^2}^2 \leq \| u \|_{L^2} \| u_{xx} \|_{L^2},
$$
implies that  $u \in C(\mathbb{R}_+,H^3(\mathbb{R}))$.
This argument completes the proof of Theorem \ref{theorem-main}.
\end{proof1}

\section{Proof of Theorem \ref{theorem-new}}

We utilize the characteristic coordinates (\ref{characteristics})
and consider the evolution of the Jacobian $\phi$ defined by (\ref{Jacobian}). Recall
that $\phi(X,0) = 1$ whereas $F(X,0) = F_0(X) = (1 - 3 u_0''(X))^{1/3}$.
By conservation (\ref{conservation-F}), assumption (\ref{assumption-1}),
and local existence in class $u \in C([0,t_0],H^3(\mathbb{R}))$, we have
$F(X,T) < 0$ for all $X \in (X_-,X_+)$ at least for small $T \geq 0$,
whereas $F(X,T) \geq 0$ for $X \leq X_-$ and $X \geq X_+$.

Using conservation (\ref{conservation-F}) and evolution (\ref{evolution}) for $F$,
we obtain the evolution equation for $\phi(X,T)$:
\begin{equation}
\label{evolution-phi}
\frac{\partial^2}{\partial T \partial X} \log(\phi) = \frac{1}{3} \phi \left(1 - \frac{F_0^3(X)}{\phi^3}\right).
\end{equation}
Integrating this equation in $T$ with the initial condition $\phi(X,0) = 1$, we obtain
\begin{equation}
\label{monotonicity}
\frac{\partial \phi}{\partial X} = \frac{1}{3} \phi(X,T) \int_0^T \phi(X,T') \left(1 - \frac{F_0^3(X)}{\phi^3(X,T')}\right) dT'.
\end{equation}
Because the right-hand side of (\ref{monotonicity}) is positive for all $X \in (X_-,X_+)$,
the function $\phi(X,T)$ is monotonically increasing for all $X \in (X_-,X_+)$ at least for small $T \geq 0$.
Moreover, we obtain the following inequality.

\begin{lemma}
Let $\psi(X,T) := \int_0^T \phi(X,T') d T'$.
Under assumption (\ref{assumption-1}) of Theorem \ref{theorem-new}, we have
\begin{equation}
\frac{\partial \psi}{\partial X} \geq \frac{1}{6} \psi^2(X,T), \quad X \in (X_-,X_+),
\end{equation}
as long as the solution remains in class $u \in C([0,t_0],H^3(\mathbb{R}))$.
\label{lemma-1}
\end{lemma}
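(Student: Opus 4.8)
The plan is to read off the inequality directly from the integrated evolution equation (\ref{monotonicity}), using only the sign of $F_0$ and the positivity of the Jacobian. First, under assumption (\ref{assumption-1}) we have $F_0(X) = (1 - 3 u_0''(X))^{1/3} < 0$ for $X \in (X_-,X_+)$, hence $F_0^3(X) < 0$ on that interval; and as long as the solution remains in class $u \in C([0,t_0],H^3(\mathbb{R}))$, the Jacobian $\phi(X,T')$ stays positive for all $T' \in [0,t_0)$ by the Remark following Lemma \ref{lemma-local-existence} (via the representation (\ref{relation-phi-u})). Consequently $1 - F_0^3(X)/\phi^3(X,T') \geq 1$ for every $X \in (X_-,X_+)$ and every $T' \in [0,T]$, so (\ref{monotonicity}) yields the pointwise lower bound
$$
\frac{\partial \phi}{\partial X}(X,T) \;\geq\; \frac{1}{3}\,\phi(X,T)\int_0^T \phi(X,T')\,dT' \;=\; \frac{1}{3}\,\phi(X,T)\,\psi(X,T).
$$

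Next I would differentiate the defining relation $\psi(X,T) = \int_0^T \phi(X,T')\,dT'$ with respect to $X$; the interchange of $\partial_X$ with $\int_0^T (\cdot)\, dT'$ is legitimate because $\phi$ and $\phi_X$ are continuous on $\mathbb{R}\times[0,t_0)$, which is exactly where the local regularity $u \in C([0,t_0],H^3(\mathbb{R}))$ together with the Sobolev embedding $H^3(\mathbb{R}) \hookrightarrow C^2(\mathbb{R})$ enters, ensuring the characteristics and the Jacobian are smooth enough in $X$. Feeding in the bound on $\phi_X$ above,
$$
\frac{\partial \psi}{\partial X}(X,T) \;=\; \int_0^T \frac{\partial \phi}{\partial X}(X,T')\,dT' \;\geq\; \frac{1}{3}\int_0^T \phi(X,T')\,\psi(X,T')\,dT'.
$$
Now observe that $\partial_{T'}\psi(X,T') = \phi(X,T')$, so the integrand is an exact $T'$-derivative, $\phi(X,T')\,\psi(X,T') = \frac{1}{2}\,\partial_{T'}\!\left[\psi^2(X,T')\right]$, and since $\psi(X,0) = 0$ we obtain
$$
\frac{\partial \psi}{\partial X}(X,T) \;\geq\; \frac{1}{6}\,\psi^2(X,T) - \frac{1}{6}\,\psi^2(X,0) \;=\; \frac{1}{6}\,\psi^2(X,T), \qquad X \in (X_-,X_+),
$$
which is the assertion.

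The argument is essentially a single manipulation once (\ref{monotonicity}) is available, so I do not anticipate a serious obstacle; the one point that genuinely requires care is the justification of differentiation under the integral sign, i.e. verifying that $\partial_X \phi$ exists and is continuous up to the finite local existence time. I would dispatch this by invoking the local well-posedness class $u \in C([0,t_0],H^3(\mathbb{R}))$ and the representation (\ref{relation-phi-u}), which also guarantees that every quantity appearing ($\phi$, $\psi$, $\psi_X$) is finite on $[0,t_0)$ so that the inequality is meaningful. The substantive work — converting this differential inequality into a finite blow-up time for $\psi$, and thence to the blow-up of $\|u(\cdot,t)\|_{H^3(\mathbb{R})}$ asserted in Theorem \ref{theorem-new} — is left to the subsequent steps.
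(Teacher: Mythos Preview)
Your proof is correct and follows essentially the same route as the paper: drop the negative term $-F_0^3/\phi^3$ in (\ref{monotonicity}) to get $\phi_X \geq \tfrac{1}{3}\phi\psi = \tfrac{1}{6}\partial_T(\psi^2)$, then integrate in $T$ (equivalently, swap $\partial_X$ with $\int_0^T$ and recognize the exact $T'$-derivative). The paper's argument is simply a more compressed version of yours, omitting the justification of differentiation under the integral sign that you spell out.
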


\begin{proof}
Because $F_0(X) < 0$ for all $X \in (X_-,X_+)$, we have from (\ref{monotonicity}):
$$
\frac{\partial \phi}{\partial X} \geq \frac{1}{3} \phi(X,T) \int_0^T \phi(X,T') dT' = \frac{1}{6} \frac{\partial}{\partial T}
\left( \int_0^T \phi(X,T') dT' \right)^2.
$$
Integrating this inequality in $T$, we obtain the assertion of the lemma.
\end{proof}

It follows from Lemma \ref{lemma-1} that
\begin{equation}
\label{estimate-below}
\frac{\partial}{\partial X} \left( -\frac{1}{\psi} \right) \geq \frac{1}{6} \quad \Rightarrow \quad
\psi(X,T) \geq \frac{6 \psi(\xi,T)}{6 - (X-\xi) \psi(\xi,T)}, \quad X \in (\xi,X_+),
\end{equation}
for any $\xi \in (X_-,X_+)$, which may depend on $T$. Therefore,
$\psi(X,T)$ becomes infinite near $X = X_+$ provided that
$(X_+ - \xi)  \psi(\xi,T) > 6$ is preserved for all $T \in [0,T_0)$, for
which the solution is defined. To ensure that this is inevitable under assumptions of Theorem \ref{theorem-new},
we prove the following result.

\begin{lemma}
Under assumptions (\ref{assumption-1}) and (\ref{assumption-2})
of Theorem \ref{theorem-new}, there exists a $C^1$
function $\xi(T)$ and $T$-independent constants
$\xi_{\pm}$ such that $\phi(\xi(T),T) = 1$, $\xi(0) = X_0 \in (X_-,X_+)$,
and $\xi(T) \in [\xi_-,\xi_+] \subset (X_-,X_+)$ for all $T \geq 0$, as long as
the solution remains in class $u \in C([0,t_0],H^3(\mathbb{R}))$
with $U_X(X_-,T) < 0$ and $U_X(X_+,T) > 0$.
\label{lemma-2}
\end{lemma}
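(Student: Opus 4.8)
The plan is to track the evolution of the level set $\{X : \phi(X,T) = 1\}$ inside the interval $(X_-,X_+)$ and show it stays bounded away from the endpoints. First I would observe that at $T = 0$ we have $\phi \equiv 1$ on all of $(X_-,X_+)$, so I need to pick out a distinguished point; the natural choice is $X_0$, where by assumption $u_0'$ changes sign from negative to positive. From relation $(\ref{relation-phi-u})$, $\phi(X_0,T) = \exp\bigl(\int_0^T u_x(x(X_0,T'),T')\,dT'\bigr)$, and since $u_0'(X_0) = 0$, the sign of $\partial_T \phi(X_0,T)$ at $T=0$ is governed by $u_x$ along the characteristic through $X_0$. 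More usefully, I would differentiate $\phi$ in $X$: from $(\ref{monotonicity})$, $\partial_X \phi > 0$ throughout $(X_-,X_+)$ for $T > 0$, so $\phi(\cdot,T)$ is strictly monotone increasing there. Hence for each $T$ there is \emph{at most one} point where $\phi(X,T) = 1$; the content of the lemma is that such a point exists and remains in a fixed compact subinterval.

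Next I would establish existence of $\xi(T)$ by an intermediate value argument combined with boundary control. At $X = X_-$, the hypothesis $U_X(X_-,T) < 0$ together with $(\ref{relation-2})$, $\phi_T = U_X/\phi \cdot \phi = U_X$... more precisely $\partial_T \phi(X_-,T) = u_x\,\phi$ has the sign of $U_X(X_-,T) < 0$, forcing $\phi(X_-,T) < 1$ for $T > 0$ (since $\phi(X_-,0)=1$ and, using that $F_0(X_-) \ge 1$ so the right side of $(\ref{evolution-phi})$ is $\le 0$ at $X_-$, $\phi(X_-,\cdot)$ cannot exceed $1$); symmetrically $U_X(X_+,T) > 0$ gives $\phi(X_+,T) > 1$. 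By continuity of $\phi(\cdot,T)$ and strict monotonicity in $X$ on $(X_-,X_+)$, there is a unique $\xi(T) \in (X_-,X_+)$ with $\phi(\xi(T),T) = 1$. Smoothness of $\xi(T)$ then follows from the implicit function theorem: $\phi$ is $C^1$ in both arguments as long as $u \in C([0,t_0],H^3)$, and $\partial_X \phi(\xi(T),T) > 0$ is the needed nondegeneracy, so $\xi \in C^1$ with $\xi(0) = X_0$.

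The harder part is the uniform confinement $\xi(T) \in [\xi_-,\xi_+] \Subset (X_-,X_+)$. For this I would quantify the boundary decay: using $(\ref{evolution-phi})$ at points $X$ near $X_-$ where $F_0(X)^3 \ge 1$, the right-hand side $\frac13 \phi(1 - F_0^3/\phi^3)$ is negative whenever $\phi < F_0$, which keeps $\partial_X\log\phi$ from growing and, combined with $\partial_T\log\phi = u_x$ and the sign condition at $X_-$, gives a lower bound $\phi(X,T) \le 1 - c$ on a left-neighborhood $[X_-,\xi_-]$ uniformly in $T$, where $c > 0$ depends only on the initial data and on strict negativity of $U_X(X_-,\cdot)$ — one extracts this by a barrier/comparison argument for the ODE $(\ref{evolution-phi})$ in $X$ at fixed $T$, integrating from $X_-$ and using that $\psi(X,T)$ (hence $\phi$) is still finite as long as the solution exists. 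A symmetric argument near $X_+$ gives $\phi(X,T) \ge 1 + c$ on $[\xi_+,X_+]$. Since $\xi(T)$ is the unique zero of $\phi(\cdot,T)-1$ and $\phi(\cdot,T)-1$ is negative on $[X_-,\xi_-]$ and positive on $[\xi_+,X_+]$, necessarily $\xi(T) \in [\xi_-,\xi_+]$ for all admissible $T$. I expect the main obstacle to be making the barrier estimate genuinely \emph{uniform in $T$} — the quantity $\int_0^T\phi\,dT'$ appearing in $(\ref{monotonicity})$ grows, so one must be careful that the comparison at the left endpoint exploits the \emph{sign} (rather than just the size) of the integrand, using $U_X(X_-,T) < 0$, and dovetails with Lemma \ref{lemma-1}'s blow-up mechanism on the right without circular dependence on the (as yet unproven) wave-breaking time.
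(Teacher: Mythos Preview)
Your existence and uniqueness argument for $\xi(T)$ is essentially correct and parallels the paper: the key inputs are $\phi_T(X,T)=U_X(X,T)$ (so $\phi(X_-,T)<1<\phi(X_+,T)$ under the sign hypotheses) and the strict monotonicity $\phi_X>0$ on $(X_-,X_+)$ coming from (\ref{monotonicity}). One minor slip: at the endpoint $X_-$ the assumptions force $1-3u_0''(X_-)=0$ by continuity, so $F_0(X_-)=0$, not $F_0(X_-)\geq 1$; this side remark is not needed anyway, since $\phi_T(X_-,T)=U_X(X_-,T)<0$ already gives $\phi(X_-,T)<1$ directly.

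Where you diverge from the paper is in the confinement step. You propose a barrier argument on $\phi$ itself, trying to pin down $T$-uniform strips $[X_-,\xi_-]$ and $[\xi_+,X_+]$ on which $\phi-1$ has a definite sign with a quantitative gap. As you anticipate, this is awkward because $\phi(\cdot,0)\equiv 1$ kills any such gap at $T=0$, and the growing factor $\int_0^T\phi\,dT'$ in (\ref{monotonicity}) makes a clean $T$-uniform comparison delicate. The paper avoids this entirely by working with the ODE for the level curve that your implicit-function step already produces,
\[
\xi'(T)=-\frac{\phi_T(\xi(T),T)}{\phi_X(\xi(T),T)}=-\frac{U_X(\xi(T),T)}{\phi_X(\xi(T),T)},
\]
with $\phi_X(\xi(T),T)>0$ from (\ref{monotonicity}). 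The boundary hypotheses $U_X(X_-,T)<0$ and $U_X(X_+,T)>0$ then make $\xi'$ point \emph{inward} whenever $\xi(T)$ approaches either endpoint, which is the confinement mechanism; no quantitative barrier on $\phi$ is needed. Your route could in principle be pushed through, but it manufactures exactly the uniformity obstacle you flag, whereas the ODE viewpoint converts the sign information on $U_X$ directly into a sign on $\xi'$ and closes the argument in one line.
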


\begin{proof}
Under assumption (\ref{assumption-2}), the function $\phi_T |_{T = 0} = U_X |_{T = 0} = u_0'(X)$
changes sign at $X = X_0$ from being negative for $X \in (X_-,X_0)$ to being
positive for $X \in (X_0,X_+)$. Therefore, we can define $\xi(0) = X_0$ and consider the level curve
$\phi(\xi(T),T) = 1$. It follows from the definition
(\ref{Jacobian}) that the function $\phi(X,T)$ is continuously differentiable in $X$ and $T$
as long as the solution remains in class $u \in C([0,t_0],H^3(\mathbb{R}))$
with
\begin{equation}
\label{ode-xi}
\frac{d \xi}{d T} = -\frac{\phi_T(\xi(T),T)}{\phi_X(\xi(T),T)} = -\frac{U_X(\xi(T),T)}{\phi_X(\xi(T),T)}.
\end{equation}
Equation (\ref{monotonicity}) implies that $\phi_X(\xi(T),T) > 0$ as long as $\xi(T)$ remains
in the interval $(X_-,X_+)$. The differential equation (\ref{ode-xi}) hence implies that if
$U_X(X_-,T) < 0$ and $U_X(X_+,T) > 0$ for all $T \geq 0$,
for which the solution is defined, then there exists $T$-independent constants $\xi_{\pm}$
such that $\xi(T) \in [\xi_-,\xi_+] \subset (X_-,X_+)$.
\end{proof}

\begin{remark}
Since $\xi(0) = X_0$ is the point of minimum of $U(X,0) = u_0(X)$ and $\phi_X(X,0) = 0$,
it follows from equation (\ref{ode-xi}) that
$$
\xi'(0) = -\frac{U_{XT}(X_0,0) + \xi'(0) U_{XX}(X_0,0)}{\phi_{XT}(X_0,T)} \quad
\Rightarrow \quad \xi'(0) = -\frac{u_0(X_0)}{2 u_0''(X_0)}.
$$
This equation shows that $\xi'(0) > 0$ if $u_0(X_0) < 0$ and $\xi'(0) < 0$ if $u_0(X_0) > 0$. Therefore,
it is not apriori clear if $\xi(T)$ can reach $X_-$ or $X_+$ in a finite time.
The restrictions $u_X(X_-,T) < 0$ and $u_X(X_+,T) > 0$ serve as a sufficient condition that
$\xi(T)$ does not reach $X_-$ and $X_+$ in a finite time, for which the solution is defined.
\end{remark}

With the help of Lemmas \ref{lemma-1} and \ref{lemma-2}, we complete
the proof of Theorem \ref{theorem-new}. \\

\begin{proof1}{\em of Theorem \ref{theorem-new}.}
We use estimate (\ref{estimate-below}) with $\xi(T)$ defined by
Lemma \ref{lemma-2}. Then, we have
\begin{equation}
\label{last-formula}
\int_0^T \phi(X,T') dT' \geq \frac{6 T}{6 - (X-\xi(T))T}, \quad X \in (\xi(T),X_+).
\end{equation}
By Lemma \ref{lemma-2}, there are $T$-independent constants $\xi_{\pm}$ such that
$\xi(T) \in [\xi_-,\xi_+] \subset (X_-,X_+)$ as long as $U_X(X_-,T) < 0$ and $U_X(X_+,T) > 0$.
The lower bound in (\ref{last-formula}) diverges at a point $X \in (\xi_+,X_+)$
if $T > \frac{6}{X_+-\xi_+}$. However,
divergence of $\int_0^T \phi(X,T') dT'$ implies divergence of $\phi(X,T)$ for some $X \in (\xi_+,X_+)$
also in a finite time $T_0 \in (0,\infty)$.
Then, equation (\ref{relation-phi-u}) shows that $u_x(x,t)$ cannot be bounded
if $\phi(X,T)$ becomes infinite for some $X \in (\xi_+,X_+)$ and some $T = T_0$,
hence the norm $\| u(\cdot,T) \|_{H^3(\mathbb{R})}$ diverges as $T \uparrow T_0$.
This argument completes the proof of Theorem \ref{theorem-new}.
\end{proof1}

\begin{remark}
Based on the asymptotic analysis and numerical simulations of \cite{GH},
we anticipate that divergence of $\phi(X,T)$ near $X = X_+$ is related to the
vanishing of $\phi(X,T)$ near $X = X_-$, such that equation (\ref{relation-2}) with
$U_X(X_-,T) < 0$ would imply that $u_x$ diverges in a finite time near $x = x_-(t)$.
However, the best that  can be obtained from equation (\ref{monotonicity}) is
\begin{equation}
\label{upper-bound-phi}
\phi(X_-,T) \leq \phi(X,T) e^{-\alpha(X)T}, \quad \alpha(X) := \frac{1}{2^{2/3}}
\int_{X_-}^X |F_0(X')| dX', \quad X \in [X_-,X_+].
\end{equation}
This upper bound is obtained from the minimization of the integrand in (\ref{monotonicity})
as follows:
$$
\phi + \frac{|F_0(X)|^3}{\phi^2} \geq \frac{3}{2^{2/3}} |F_0(X)|.
$$
If $X = \xi(T) \in (X_-,X_+)$ with $\phi(\xi(T),T) = 1$, the bound (\ref{upper-bound-phi})
only gives an exponential
decay of $\phi(X_-,T)$ to zero as $T \to \infty$. The same difficulty appears in
our attempts to use bound (\ref{upper-bound-phi}) in estimate (\ref{estimate-below}).
\end{remark}

{\bf Acknowledgement:} D.P. appreciates support and hospitality of the Department of
Mathematical Sciences of Loughborough University. The research was supported by
the LMS Visiting Scheme Program.

\end{document}